\newtheorem{lemma}{\bf Lemma}
\newtheorem{assumption}{\bf Assumption}
\newtheorem{theorem}{\bf Theorem}
\def\BibTeX{{\rm B\kern-.05em{\sc i\kern-.025em b}\kern-.08em
    T\kern-.1667em\lower.7ex\hbox{E}\kern-.125emX}}
\begin{document}

\title{Wireless Federated Learning over MIMO Networks: Joint Device Scheduling and Beamforming Design
% \thanks{This work was supported in part by the National Natural Science Foundation of China (NSFC) under Grant $62001294$.}
}

\author{\IEEEauthorblockN{$\text{Shaoming Huang}^{*\dag\ddagger}$, $\text{Pengfei Zhang}^{*\dag\ddagger}$, $\text{Yijie Mao}^{*}$, $\text{Lixiang Lian}^{*}$, and $\text{Yuanming Shi}^{*}$}
\IEEEauthorblockA{${}^{*}$\text{School of Information Science and Technology, ShanghaiTech University, Shanghai $201210$, China} \\
${}^{\dag}$\text{Shanghai Institute of Microsystem and Information Technology, Chinese Academy of Sciences, China}\\
${}^{\ddagger}$\text{University of Chinese Academy of Sciences, Beijing $100049$, China}\\
E-mail:  huangshm@shanghaitech.edu.cn, pfzhang1003@163.com, \{maoyj, lianlx, shiym\}@shanghaitech.edu.cn}
}

\maketitle

\begin{abstract}
Federated learning (FL) is recognized as a key enabling technology to support distributed artificial intelligence (AI) services in future 6G. By supporting decentralized data training and collaborative model training among devices, FL inherently tames privacy leakage and reduces transmission costs. Whereas, the performance of the wireless FL is typically restricted by the communication latency. Multiple-input multiple-output (MIMO) technique is one promising solution to build up a communication-efficient edge FL system with limited radio resources. In this paper, we propose a novel joint device scheduling and receive beamforming design approach to reduce the FL convergence gap over shared wireless MIMO networks. Specifically, we theoretically establish the convergence analysis of the FL process, and then apply the proposed device scheduling policy to maximize the number of weighted devices under the FL system latency and sum power constraints. Numerical results verify the theoretical analysis of the FL convergence and exhibit the appealing learning performance of the proposed approach.
\end{abstract}

\section{Introduction}
With the large-scale deployment of 5G around the world, 6G has attracted increasing attention form both industry and academia, which is envisioned as an unprecedented evolution from ``connected things'' to ``connected intelligence'', thereby forming the backbone of a hyper-connected cyber-physical world with the integration of humans, things and intelligence \cite{letaief2019roadmap, shi2020communication, saad2020vision, tataria20216g}. Recently, various artificial intelligence (AI) applications for 6G have emerged and penetrated in almost all verticals such as sustainable cities \cite{jararweh2020trustworthy}, industrial Internet of Things (IoT) \cite{lopez2021massive}, e-health services \cite{mucchi20206g}, etc. Traditional centralized machine learning (ML) frameworks, which require a cloud center to store and process the raw data collected from devices, are becoming impractical for enormous privacy-sensitive data and low-latency communication requirements \cite{zhu2020toward}. All the aforementioned reasons stimulate the development of federated learning (FL), where devices with limited hardware resources respectively train their local models with their raw datasets, and only the local models are transmitted from devices to the edge server for global model aggregation \cite{mcmahan2017communication}.

The deployment of FL over real wireless networks still faces significant challenges, among which the communication latency is becoming the major bottleneck with the rapid advances in computational capability. Considerable researches have been devoted to address this issue for both analog and digital FL systems. In analog FL systems, over-the-air computation (AirComp) technology is broadly leveraged to implement the efficient concurrent transmission of locally computed updates by exploiting the superposition property of wireless multiple access channels (MAC) \cite{yang2020federated, zhu2020broadband, amiri2020federated}. For multiple-input single-output (MISO) AirComp, to better trade off the learning performance and the communication efficiency, \cite{yang2020federated} proposed a joint device scheduling and receive beamforming design approach to maximize scheduled devices, and \cite{zhu2020broadband} proposed a broadband analog aggregation scheme which enables linear growth of the latency-reduction ratio with the device population. Furthermore, a distributed stochastic gradient descent algorithm was implemented for a bandwidth-limited fading MAC \cite{amiri2020federated}. However, the model aggregation performance of MISO AirComp is severely limited by the unfavorable wireless propagation channel. To build up a communication-efficient edge FL system, multiple-input multiple-output (MIMO) technique has been widely recognized as a promising way to support high-reliability for massive device connectivity as well as high-accuracy and low-latency for model aggregation via exploiting spatial degree of freedom \cite{wang2020federated, yang2020federatedieeenet, liu2021reconfigurable}.

Another line of works concentrates on the digital FL to circumvent the strict synchronization requirement at the symbol level and inherent corruption from channel noise during model aggregation stage in the analog FL system \cite{tran2019federated, chen2021convergence, shi2020joint, li2021delay}. \cite{tran2019federated} investigated the trade-off between the convergence time and device energy consumption via Pareto efficiency model. \cite{chen2021convergence} developed a probabilistic device scheduling scheme and first used artificial neural networks for the prediction of model parameters to minimize the convergence time. \cite{shi2020joint} optimized the convergence rate with a given time budget via a joint device scheduling and resource allocation. In \cite{li2021delay}, a theoretical analysis of the distributions of the per-round delay and overall delay were characterized, respectively. However, prior works on digital FL only are limited to the situation where each device transmits its local updates to the single-antenna edge server with orthogonal multiplexing approaches, e.g., time-division multiple access (TDMA) \cite{tran2019federated} and frequency-division multiple access (FDMA) \cite{chen2021convergence, shi2020joint, li2021delay}. In contract, multiple-antenna technique has shown its apparent performance gain in terms of convergence time \cite{vu2020cell}. To the best of our knowledge, there is still a lack of investigation on the digital FL systems over shared wireless MIMO networks.

In this paper, we consider a delay-aware digital FL system, where the edge server equipped with multiple antennas orchestrates the learning process via exchanging model parameters with single-antenna devices in shared wireless channels. At first, we theoretically establish the convergence analysis of the FL process after fixed rounds based on reasonable assumptions, which shows that as the weighted sum of scheduled devices of each round increases, the convergence optimality gap will be decreased. In order to minimize the convergence optimality gap, we design a novel device scheduling policy to allow as many weighted devices as possible to participate in the training process. Concretely speaking, we optimize receive beamforming design to obtain a device priority, and then iteratively add a device to the scheduling set based on the device priority when corresponding constraints of the wireless FL system satisfied. Numerical experiments are conducted to validate the theoretical analysis and demonstrate the superior performance of the proposed scheduling policy.

 \section{System Model}
 \subsection{Federated Learning Model} \label{FL_Model}
We consider the canonical FL system consisting of one edge server and $K$ devices indexed by $\mathcal{K} = \{1, \ldots, K\}$, which aims to collaboratively learn a common model \cite{mcmahan2017communication}. Each device $k \in \mathcal{K}$ has its own local dataset, of which the raw data are unavailable to other devices and the edge server. To facilitate the learning, the goal of the FL process is usually expressed as the following optimization problem:
\begin{equation}
 \underset{\bm{w} \in \mathbb{R}^d}{\text{minimize}} \  F(\bm{w}) := \sum_{k \in \mathcal{K}} \alpha_k \mathbb{E}_{\bm{\xi} \sim \mathcal{D}_k}[f(\bm{w};\bm{\xi})],
 \label{main_problem}
\end{equation}
where $\bm{w} \in \mathbb{R}^d $ represents the global model parameters, data sample $\bm{\xi}$ obeys a certain probability distribution $ \mathcal{D}_k $, and the global loss function $F(\cdot)$ is the weighted sum of the local loss functions $ f(\cdot;\bm{\xi}) $, where the weight factor $ \alpha_k > 0 $ satisfies $ \sum_{k \in \mathcal{K}} \alpha_k = 1 $. Basically,  $\alpha_k$ can be set as $ n_k / \sum_{k \in \mathcal{K}} n_k $, where $ n_k $ is the number of data samples at device $k$ \cite{li2020federated}.

The canonical FL runs in synchronized rounds of computation and communication process between the edge server and devices, consisting of three stages:
\begin{enumerate}
\item [1)] \emph{Global Model Dissemination: }The edge server first decides which devices to participate in the current round, and the set of scheduled devices at the round $t$ is denoted as $\mathcal{S}_{t}$. Then, the edge server broadcasts the global model parameters to all scheduled devices. In this paper, the scheduling policy is based on whether the devices satisfy the system latency constraints, which will be presented in Section \ref{System Optimization}. If there is no device scheduled at the current round, edge server will postpone this round until $\mathcal{S}_{t} \neq \varnothing$.

\begin{figure}[ht]
	\includegraphics[width=0.5\textwidth]{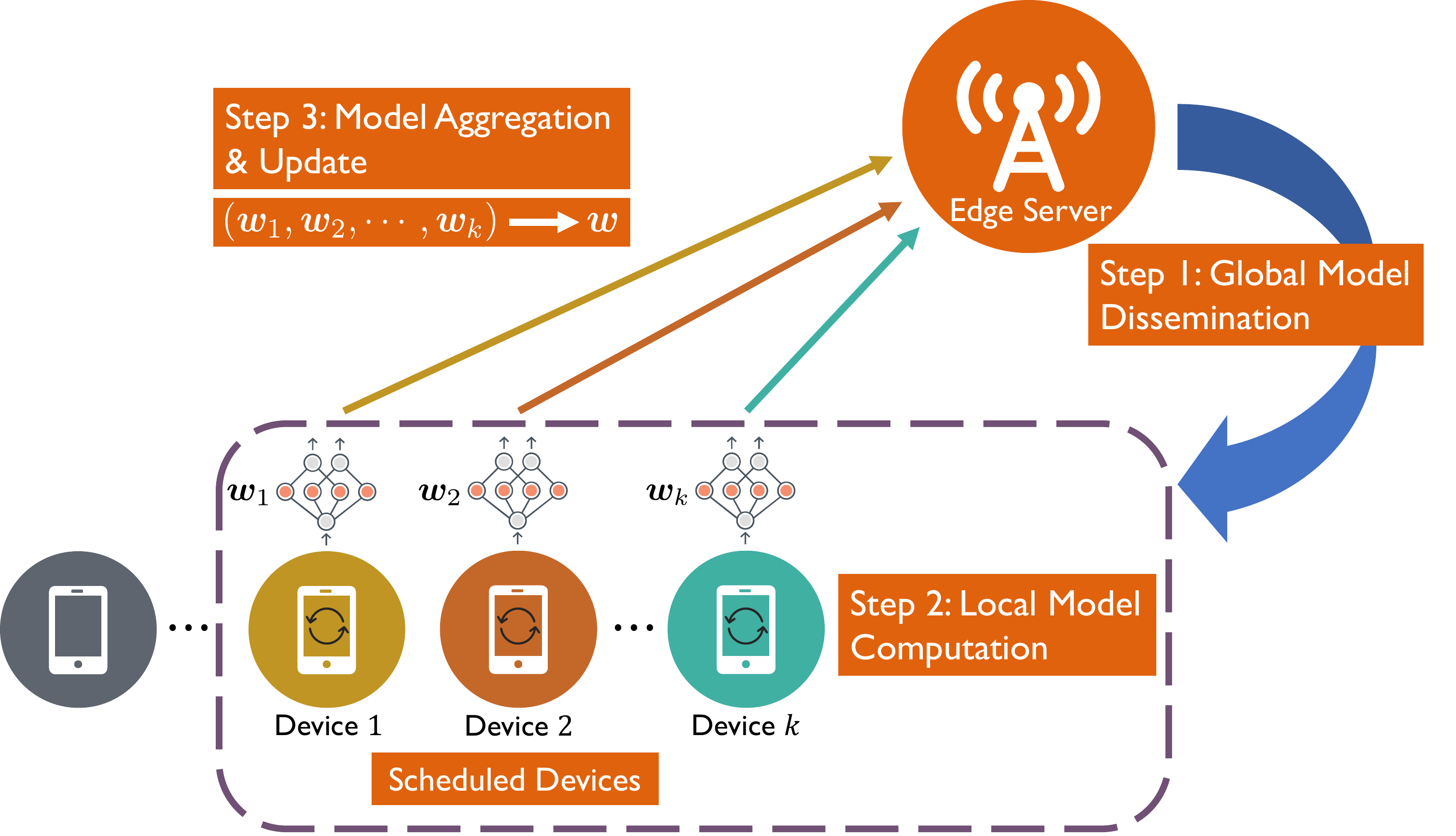}
	\vspace{-0.6cm}
	\caption{The canonical FL system consisting of one edge server and devices. }
	\label{FL_Byzantine}
\vspace{-0.5cm}
\end{figure}

\item [2)] \emph{Local Model Computation: }After each scheduled device $ k \in \mathcal{S}_t $ receives the current global model parameters, it performs local computation according to its own dataset to update the local model parameters $ \bm{w}_{k,t} $ \cite{mcmahan2017communication}. For simplification, we consider that each scheduled device updates its local model parameters via one-step gradient descent method \cite{shi2020joint, wang2020federated}, i.e.,
\begin{equation}
	\bm{w}_{k,t} = \bm{w}_{t-1} - \eta_{t} \frac{\sum_{i = 1}^{n_k}\nabla f(\bm{w}_{t-1};\bm{\xi}_{k,i})}{n_k}, 
	\label{update_rule}
\end{equation}
where $ \bm{w}_{t-1} $ is the received global model parameters, $ \eta_{t}$ is the learning rate, and $ \bm{w}_{k,t} $ is the generated local model parameters of device $k$.

\item [3)] \emph{Model Aggregation \& Update: }All scheduled devices upload their local models. Then the edge server uses a weighted sum method to aggregate these local models so as to generate new global model parameters, i.e.,
\begin{equation}
	\bm{w}_{t}= \frac{\sum_{k \in \mathcal{S}_t} \alpha_k \bm{w}_{k,t}}{\sum_{k \in \mathcal{S}_t} \alpha_k} ,
	\label{agg_rule}
\end{equation}
where $ \bm{w}_{t} $ is the updated global model parameters.  
\end{enumerate}

These three stages are repeated until the whole FL system attains a sufficiently trained model, as shown in Fig. \ref{FL_Byzantine}.

\subsection{Transmission Model}
We study the information exchange process between the edge server and devices over shared wireless MIMO networks, where the edge server equipped with $ N $ antennas orchestrates all single-antenna devices. The model parameters are encoded into digital signals to achieve reliable error-free transmission \cite{amiri2020federated}, such as polar code. Additionally, we consider the block flat fading channel and assume that one FL round could be completed within a communication block. The assumption is practically well justified when on-device FL models are typically light-weight under a few tens of thousands of parameters, whose time consumption is in the same order of channel coherent block \cite{liu2020privacy}. Therefore, it is reasonable to finish one round of FL training process within one communication block.

We first consider the uplink process, that all scheduled devices are concurrently transmitting their data streams. The received signal at the edge server can be expressed as
\begin{equation}
	\bm{y}_t = \sum_{k \in \mathcal{S}_t} \bm{h}_{k,t} \sqrt{p_{k,t}} x_{k,t} + \bm{n}_t,
\end{equation}
where $x_{k, t}$ is the transmitted signal symbol of device $k$, $p_{k, t}$ is the transmit power of device $k$, $\bm{h}_{k, t} \in \mathbb{C}^{N}$ is the wireless channel coefficient between the edge server and device $k$, and $\bm{n}_{t} \sim \mathcal{CN}(\bm{0}, \sigma^2 \bm{I})$ is the additive white Gaussian noise (AWGN) at the edge server, where $\sigma^2$ is noise power. Herein, $x_{k, t}$ is normalized with $\mathbb{E}\{|x_{k, t}|^2\} = 1$, and the noise power satisfies $\sigma^2 = B N_{0}$, where $B$ is the uplink bandwidth and $N_0$ is the noise power spectral density.

Due to the existence of interference and noise, the uplink process is the primary bottleneck of one computation and communication round. To improve the communication efficiency of the FL system, the linear receive beamforming technique is deployed at the edge server to decode $k$-th device's data stream, which is denoted as $\bm{m}_{k, t} \in \mathbb{C}^{N}$. Without loss of generality, we normalize the receive beamforming vector as $ \| \bm{m}_{k,t} \|_2^2 = 1$. Then, we have the signal-to-interference-plus-noise ratio (SINR) for the $k$-th device's data stream
\begin{equation}\label{SINR_express}
    {\sf SINR}^{\sf ul}_{k, t}(\bm{m}_{k, t}, \bm{p}_{t})
    % = &\frac{ \bm{m}^{\sf H}_{k, t} \left(p_{k, t} \bm{h}_{k, t} \bm{h}^{\sf H}_{k, t} \right) \bm{m}_{k, t}  }{ \bm{m}^{\sf H}_{k, t} \left( \sum_{i \in \mathcal{S}_{t}/\{k\}} p_{i, t} \bm{h}_{i, t} \bm{h}^{\sf H}_{i, t} + \sigma^2 \bm{I} \right) \bm{m}_{k, t} } \\
    = \frac{ p_{k, t} \left|\bm{m}_{k, t}^{\sf H} \bm{h}_{k, t}\right|^2 }{ \sum_{i \in \mathcal{S}_{t}/\{k\}} p_{i, t} \left|\bm{m}_{k, t}^{\sf H} \bm{h}_{i, t}\right|^2 + \sigma^2 },
\end{equation}
where $\bm{p}_{t} = [p_{1, t}, p_{2, t}, \ldots, p_{K, t}]^{\sf T}$ is the collection of transmit power of all devices. Note that ${\sf SINR}^{\sf ul}_{k, t}$ represents the possibly minimal SINR of device $k$ during the whole uplink transmission interval at round $t$.

\subsection{Latency Model}
Based on the three stages of the FL process in Section \ref{FL_Model}, the computation and communication latency can be mainly classified into three categories \cite{shi2020joint}, i.e.,
% Based on three stages of FL system, the computation and communication latency can be mainly classified into three categories, i.e., downlink broadcast latency, local computation latency and uplink transmission latency. In the following, we will focus on the $t$-th round of FL.
\begin{enumerate}
\item \textit{Downlink Broadcast Latency:}
In view of the fact that the edge server has relatively less stringent power constraint than devices and could occupy the whole downlink bandwidth to broadcast the global model, the downlink broadcast latency is negligible.

\item \textit{Local Computation Latency:} Since each scheduled device executes one-step update via gradient descent method based on its local dataset, the local computation latency of device $k$ is given by
\begin{equation}
    T_{k,t}^{\sf loc} = \frac{n_{k} R}{f_{k}^{\sf cap}},
\end{equation}
where $n_{k}$ is the size of local dataset at device $k$, $R$ is the number of processing unit (e.g., CPU or GPU) cycles for calculating one data sample, and ${f_{k}^{\sf cap}}$ is the computational capacity of device $k$, which is quantified by the frequency of the processing unit.

\item \textit{Uplink Transmission Latency:} Combining the possible minimal SINR expression in \eqref{SINR_express}, the uplink transmission rate of device $k$ can be expressed as
\begin{equation} \label{uplink_rate}
    r_{k, t}^{\sf ul} = B \operatorname{log}_{2}\left(1 + {\sf SINR}^{\sf ul}_{k, t} \right).
\end{equation}
Using $I$-bit number to represent model parameter, the uplink transmission latency of device $k$ is given by
\begin{equation}
    T_{k,t}^{\sf ul} = \frac{Id}{r_{k, t}^{\sf ul}} = \frac{Id}{B \operatorname{log}_{2}\left(1 + {\sf SINR}^{\sf ul}_{k, t} \right)},
\end{equation}
% Note that when one device finishes the local model transmission, the SINR of other devices will obviously increase and the transmission rate will also increase. 
where $d$ is the dimension of model parameters. Herein, instead of adaptive rate transmission strategy, we employ the fixed rate transmission strategy for simplification, and choose the channel capacity of the worst case $\operatorname{log}_{2}\left(1 + {\sf SINR}^{\sf ul}_{k, t} \right)$ as the fixed transmission rate.
\end{enumerate}
On account of the synchronization requirement of the FL system and the limited length of wireless channel coherent block in practice, we expect to constrain the total latency of the $t$-th round, which is determined by the slowest device \cite{shi2020joint}
\begin{equation} \label{total_latency}
	T_t^{\sf sys}(\mathcal{S}_t,\{\bm{m}_{k,t}\} ,{\bm{p}_t}) = \underset{k \in \mathcal{S}_t}{\text{max}}(T_{k,t}^{\sf loc} + T_{k,t}^{\sf ul}).
\end{equation}
This indicates that each device could start its local computation as long as it receives the global model parameters, and then uploads its local model parameters as long as it accomplishes its local computation.

\section{Convergence Analysis and Problem Formulation}
\subsection{Convergence Analysis}
\label{Convergence_Analysis}
We establish the convergence analysis of the FL process based on the following assumptions, which have been made in the works \cite{wang2020federated, li2019convergence}.
\begin{assumption} \label{L_smooth}
	\emph{($L$-smoothness):} The differentiable function $F(\bm{w})$ is smooth with a positive constant $L$, i.e., for all $\bm{v}$ and $\bm{w}$, we have
	\begin{equation}
		F(\bm{v}) \leq F(\bm{w}) + (\bm{v}-\bm{w})^{\sf T} \nabla F(\bm{w}) + \frac{L}{2}\|\bm{v}-\bm{w}\|_2^2.
	\end{equation} 
\end{assumption}

\begin{assumption} \label{gradient_bound}
\emph{(Bounded local gradients):} The local gradients at all devices are uniformly bounded, i.e., there exist constants $\kappa \geq 0$ such that for all $\bm{w}$ and $\bm{\xi}$,
\begin{equation}
	\|\nabla f(\bm{w}; \bm{\xi}) \|_2^2 \leq \kappa.
\end{equation}
\end{assumption}

\begin{theorem} \label{thm1}
Suppose that \textbf{Assumption \ref{L_smooth}} and \textbf{\ref{gradient_bound}} hold, then given the collection of scheduling results $\{\mathcal{S}_t\}$ and setting the learning rate to be $0 < \eta_{t} \equiv \varsigma \leq \frac{1}{L}$, the average norm of global gradients after $\tau$ rounds is upper bounded by
\begin{equation} \label{theorem_1}
\begin{aligned}
\frac{1}{\tau} \sum_{t=0}^{\tau-1} \| \nabla F(\bm{w}_{t-1} ) \|_2^2 \leq & \frac{2 \left(F(\bm{w}_0) - F(\bm{w}^*) \right) }{\varsigma \tau} \\
& + \underbrace{\frac{4\kappa}{\tau} \sum_{t=0}^{\tau-1} \left(1 - \sum_{k \in \mathcal{S}_t} \alpha_k \right)^2}_{ g(\{\mathcal{S}_t\}) },
\end{aligned}
\end{equation}
where $\bm{w}^*$ is the globally optimal solution for \eqref{main_problem}.
\end{theorem}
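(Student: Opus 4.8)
The plan is to turn the aggregation rule into an equivalent gradient-descent step, derive a one-round descent inequality from $L$-smoothness, isolate a ``scheduling error'' term, bound that term using the unscheduled weight, and finally telescope over the $\tau$ rounds. First I would introduce the local gradient $\bm{g}_{k,t} := \frac{1}{n_k}\sum_{i=1}^{n_k}\nabla f(\bm{w}_{t-1};\bm{\xi}_{k,i})$ appearing in \eqref{update_rule} and set $A_t := \sum_{k\in\mathcal{S}_t}\alpha_k$, so that \eqref{agg_rule} reads $\bm{w}_t-\bm{w}_{t-1} = -\varsigma\,\tilde{\bm g}_t$ with $\tilde{\bm g}_t := \frac{1}{A_t}\sum_{k\in\mathcal{S}_t}\alpha_k\bm{g}_{k,t}$, while the full gradient is $\nabla F(\bm{w}_{t-1}) = \sum_{k\in\mathcal{K}}\alpha_k\bm{g}_{k,t}$. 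Applying \textbf{Assumption \ref{L_smooth}} to $\bm{w}_{t-1},\bm{w}_t$ and expanding the cross term through the polarization identity $\langle\nabla F,\tilde{\bm g}_t\rangle = \tfrac12(\|\nabla F\|_2^2 + \|\tilde{\bm g}_t\|_2^2 - \|\nabla F-\tilde{\bm g}_t\|_2^2)$ produces a residual $\tfrac{\varsigma}{2}(L\varsigma-1)\|\tilde{\bm g}_t\|_2^2$ whose coefficient is nonpositive precisely because $\varsigma\le 1/L$. Dropping it yields the per-round bound
\[
\tfrac{\varsigma}{2}\|\nabla F(\bm{w}_{t-1})\|_2^2 \leq F(\bm{w}_{t-1}) - F(\bm{w}_t) + \tfrac{\varsigma}{2}\|\nabla F(\bm{w}_{t-1}) - \tilde{\bm g}_t\|_2^2 .
\]

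The crux is bounding the scheduling error by $4\kappa(1-A_t)^2$. The key algebraic observation is the decomposition
\[
\nabla F(\bm{w}_{t-1}) - \tilde{\bm g}_t = -(1-A_t)\,\tilde{\bm g}_t + \sum_{k\notin\mathcal{S}_t}\alpha_k\bm{g}_{k,t},
\]
obtained by splitting $\mathcal{K}$ into $\mathcal{S}_t$ and its complement and collecting the $\tilde{\bm g}_t$ terms. From \textbf{Assumption \ref{gradient_bound}} together with Jensen's inequality each $\|\bm{g}_{k,t}\|_2\le\sqrt{\kappa}$, hence $\|\tilde{\bm g}_t\|_2\le\sqrt{\kappa}$ (a convex combination) and $\|\sum_{k\notin\mathcal{S}_t}\alpha_k\bm{g}_{k,t}\|_2\le(1-A_t)\sqrt{\kappa}$, since the unscheduled weights sum to $1-A_t=\sum_{k\notin\mathcal{S}_t}\alpha_k\ge 0$. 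The triangle inequality then gives $\|\nabla F(\bm{w}_{t-1})-\tilde{\bm g}_t\|_2\le 2(1-A_t)\sqrt{\kappa}$, i.e. the desired $4\kappa(1-A_t)^2$ after squaring.

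Finally I would substitute this into the per-round bound, sum over $t=0,\dots,\tau-1$ so that the consecutive objective values telescope to $F(\bm{w}_0)-F(\bm{w}_\tau)\le F(\bm{w}_0)-F(\bm{w}^*)$ by optimality of $\bm{w}^*$ for \eqref{main_problem}, and divide through by $\varsigma\tau/2$; the factor bookkeeping $\tfrac{2}{\varsigma\tau}\cdot 2\kappa\varsigma = \tfrac{4\kappa}{\tau}$ reproduces \eqref{theorem_1} exactly.

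I expect the scheduling-error bound to be the main obstacle, specifically the handling of the normalization $1/A_t$ in $\tilde{\bm g}_t$: the nontrivial point is that, despite this normalization, the deviation of the aggregate from the full gradient is controlled entirely by the unscheduled mass $1-A_t$, which the decomposition above makes explicit. A secondary modeling step worth flagging is the identification $\nabla F(\bm{w}_{t-1})=\sum_{k\in\mathcal{K}}\alpha_k\bm{g}_{k,t}$, which treats the local empirical gradients used in \eqref{update_rule} as the gradients of the per-device objectives so that their $\alpha_k$-weighted sum is $\nabla F$; this is what lets the convergence analysis connect the aggregation rule to the global objective.
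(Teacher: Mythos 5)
Your proof is correct and follows essentially the same route as the paper: both write the aggregated update as a perturbed gradient step, bound the perturbation by $4\kappa\bigl(1-\sum_{k\in\mathcal{S}_t}\alpha_k\bigr)^2$ via the identical split into the renormalization error over scheduled devices and the missing unscheduled gradients, and then telescope using $F(\bm{w}_\tau)\geq F(\bm{w}^*)$. The only cosmetic difference is that you absorb the cross term via the polarization identity, while the paper expands $\|\nabla F+\bm{e}_t\|_2^2$ and applies the arithmetic--geometric mean inequality to $\nabla F(\bm{w}_{t-1})^{\sf T}\bm{e}_t$; both yield the same per-round inequality.
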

\begin{proof}
Please refer to Appendix \ref{proof_Theorem_1}.
\end{proof}

\subsection{Problem Formulation}
Based on \textbf{Theorem \ref{thm1}}, the convergence optimality gap is dominated by the second term $g(\{\mathcal{S}_t\})$. We now formulate the system optimization problem to minimize the convergence gap $g(\{\mathcal{S}_t\})$ under system latency constraint $T^{\sf thr}$ and sum power constraint $P_{\sf sum}$, which is written as
\begin{equation} \label{original_problem}
\begin{aligned}
\underset{\{\mathcal{S}_{t}\}, \{\bm{m}_{k, t}\}, \{\bm{p}_t\} }{\text{minimize}} \  & g\left(\left\{\mathcal{S}_{t}\right\}\right)\\
\text{subject to} \quad \  & T_{t}^{\sf sys} (\mathcal{S}_{t}, \{\bm{m}_{k, t}\}, \bm{p}_{t}) \leq T^{\sf thr},  \forall t,\\
&\|\bm{m}_{k, t}\|_2^2 = 1, p_{k, t} \geq 0, \forall k \in \mathcal{S}_{t}, \forall t,\\
&\ \sum_{k \in \mathcal{S}_{t}} p_{k, t} \leq P_{\sf sum},  \forall t,
\end{aligned}
\end{equation}
\begin{algorithm}
\label{scheduling problem}
\caption{Device scheduling of one FL round}
\KwIn{Channel coefficient $\bm{h}_k$, noise power $\sigma^2$, SINR requirement $\gamma^{\sf thr}_{k}$, device power constraint $P_{\sf sum}$.}
Solve problem \eqref{final_problem}, and sort $\bm{s}$ in ascending order. \\
Initialize the device scheduling set $\mathcal{S}  = \mathcal{S}^{\sf tmp} = \varnothing$. \\
\While{ $\mathcal{S} \neq \mathcal{K}$ }
{
    Add a new device to $\mathcal{S}^{\sf tmp}$ with the lowest $s_{k}$.\\
    Test if the SINR and sum power constraints are feasible for $\mathcal{S}^{\sf tmp}$ via \textbf{Algorithm \ref{feasible_check}}.\\
    If False, terminate the loop.\\
    $\mathcal{S} = \mathcal{S}^{\sf tmp}$
}
\textbf{return} $\mathcal{S}$ and corresponding $\bm{p}$ and $\{\bm{m}_{k}\}$
\end{algorithm}
where $\{\mathcal{S}_t\}$, $\{\bm{m}_{k,t}\}$ and $\{\bm{p}_t\}$ represent the collection of scheduling results, receive beamforming vectors and devices' transmit power during the total $\tau$ rounds, respectively. Note that the edge server must wait for the local models of all scheduled devices before updating the global model, so the system latency constraint $T^{\sf thr}$ plays a key role in the FL performance \cite{chen2020joint}. Since constraints of problem \eqref{original_problem} are independent of round $t$ and $g(\{\mathcal{S}_t\})$ is a decreasing function with respect to $\sum_{k\in\mathcal{S}_t} \alpha_k$, we could decouple problem \eqref{original_problem} into $\tau$ one-round sub-problem, and we solve the following one-round system optimization problem
\begin{equation} \label{transfer2}
\begin{aligned}
\underset{ \mathcal{S}, \{\bm{m}_{k}\}, \bm{p} }{\text{maximize}} \  & \sum_{k \in \mathcal{S}} \alpha_{k}\\
\text{subject to} \  & T^{\sf sys} (\mathcal{S}, \{\bm{m}_{k}\}, \bm{p}) \leq T^{\sf thr},\\
&\|\bm{m}_{k}\|_2^2 = 1, p_{k} \geq 0, \forall k \in \mathcal{S},\\
&\ \sum_{k \in \mathcal{S}} p_{k} \leq P_{\sf sum},
\end{aligned}
\end{equation}
where the subscript $t$ is omitted for brevity. By substituting \eqref{total_latency} into problem \eqref{transfer2}, we obtain
\begin{equation} \label{transfer3}
\begin{aligned}
\underset{ \mathcal{S}, \{\bm{m}_{k}\}, \bm{p} }{\text{maximize}} \  & \sum_{k \in \mathcal{S}} \alpha_{k}\\
\text{subject to} \  & {\sf SINR}^{\sf ul}_{k} \geq {\gamma}_{k}^{\sf thr} , \forall k \in \mathcal{S},\\
&\|\bm{m}_{k} \|_2^2 = 1, p_{k} \geq 0, \forall k \in \mathcal{S},\\
&\ \sum_{k \in \mathcal{S}} p_{k} \leq P_{\sf sum},
\end{aligned}
\end{equation}
where
$$
{\gamma}_{k}^{\sf thr} = 2^{ r_{k} } - 1 \text{ and } r_{k} = \frac{ I d }{ B ( T^{\sf thr} - T^{\sf loc}_{k}) }.
$$
Problem \eqref{transfer3} is challenging to solve due to the combinatorial optimization variable $\mathcal{S}$, the sparse objective function and the non-convex SINR constraints. To tackle this issue, we shall exploit the uplink–downlink duality of MIMO systems in the next section.

\begin{algorithm}
 \label{feasible_check}
 \caption{Feasibility test in the uplink transmission}
 \textbf{Initialize:}Arbitrary $\bm{p}^{(0)}$ such that $\sum_{k \in \mathcal{S}} p_k^{(0)} = P_{\sf sum}$.\\
 \Repeat{$\bm{p}$ {\rm convergence}}{
 		In the $l$-the iteration, update $\bm{p}^{(l-1)}$ according to 
 		\begin{equation}
 			\tilde{p}_k = \frac{\gamma_k^{\sf thr}}{\bm{h}_k^{\sf H}\bm{\Sigma}_k^{-1} \bm{h}_k}, \forall k \in \mathcal{S},
 		\end{equation}
 		where
 		\begin{equation}
 			\bm{\Sigma}_k = \sum_{ i \in \mathcal{S}, i \neq k} p_i^{(l-1)} \bm{h}_i \bm{h}_i^{\sf H} + \sigma^2\bm{I}.
 		\end{equation}
 		Normalize $\tilde{p}_k$ according to $p_k^{(l)} = \frac{P_{\sf sum}}{\sum_{k \in \mathcal{S}} \tilde{p}_k}\tilde{p}_k$.
 		% \begin{equation}
 		% 	p_k^{(l)} = \frac{P_{\sf sum}}{\sum_{k \in \mathcal{K}} \tilde{p}_k}\tilde{p}_k.
 		% \end{equation}
 		}
\Return{ {\rm Boolean value of }$\sum_{k \in \mathcal{S}} p_k \leq P_{\sf sum}$ }
% \uIf{$\sum_{k \in \mathcal{S}} p_k \leq P_{\sf sum}$}{
% 	\Return{\rm True}}
% 	\Else{\Return {\rm False}}
\end{algorithm}

\section{System Optimization} \label{System Optimization}
In this section, we utilize the equivalence between the uplink and downlink device scheduling problems \cite{zhao2015user}, where the uplink SINR constraints for all devices and the sum power constraint is converted into a dual downlink constraints, as presented in the following lemma.
\begin{lemma}
A scheduling set $\mathcal{S}$ can satisfy SINR requirements $\{ \gamma_{k}^{\sf thr} \}$ and sum power constraint $P_{\sf sum}$ of \eqref{transfer3} in the uplink transmission if and only if there exist dual transmit beamforming vectors $\hat{\bm{m}}_{k} \in \mathbb{C}^{N}$ such that
\begin{equation} \label{downlink_constraint}
\left\{ \begin{aligned}
&{\sf SINR}^{\sf dl}_{k} \geq {\gamma}_{k}^{\sf thr}, \forall k \in \mathcal{S}\\
&\sum_{k\in \mathcal{S}} \|\hat{\bm{m}}_{k}\|_2^{2} \leq \frac{P_{\sf sum}}{\sigma^2}
\end{aligned}\right.,
\end{equation}
where dual downlink SINR for the $k$-th device is defined as
\begin{equation}
    {\sf SINR}^{\sf dl}_{k} = \frac{ \left|\hat{\bm{m}}_{k}^{\sf H} \bm{h}_{k}\right|^2 }{ \sum_{i \in \mathcal{S}/\{k\}} \left|\hat{\bm{m}}_{i}^{\sf H} \bm{h}_{k}\right|^2 + 1}.
\end{equation}
\end{lemma}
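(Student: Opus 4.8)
The plan is to reduce both the uplink SINR system and the dual downlink SINR system to \emph{linear} feasibility systems in the transmit-power vectors, driven by a common effective cross-channel-gain matrix, and then to invoke Perron--Frobenius theory to show that the two systems are feasible simultaneously while conserving total power up to the noise scaling $\sigma^2$. The starting observation is that it costs nothing to let the uplink receive beamformers $\{\bm{m}_k\}$ and the downlink transmit beamformers $\{\hat{\bm{m}}_k\}$ share the same directions: writing $\hat{\bm{m}}_k=\sqrt{q_k}\,\bm{m}_k$ with $\|\bm{m}_k\|_2=1$ and $q_k=\|\hat{\bm{m}}_k\|_2^2$, I can treat the unit directions $\bm{m}_k$ as common and the scalars $p_k,q_k$ as the only remaining freedom. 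Defining the effective gains $G_{ki}=|\bm{m}_k^{\sf H}\bm{h}_i|^2$, the uplink constraint ${\sf SINR}^{\sf ul}_k\ge\gamma_k^{\sf thr}$ becomes $p_k G_{kk}\ge \gamma_k^{\sf thr}(\sum_{i\neq k}p_i G_{ki}+\sigma^2)$, while the dual downlink constraint becomes $q_k G_{kk}\ge\gamma_k^{\sf thr}(\sum_{i\neq k}q_i G_{ik}+1)$; the only structural difference is that the interference coupling uses $G_{ki}$ in the uplink and its transpose $G_{ik}$ in the downlink.

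First I would cast these two families in matrix form. With $\bm{D}=\mathrm{diag}(\gamma_k^{\sf thr})$, $\bm{\Lambda}=\mathrm{diag}(G_{kk})$, and $\tilde{\bm{G}}$ the off-diagonal part of $(G_{ki})$, the uplink system reads $(\bm{I}-\bm{\Psi})\bm{p}\ge\sigma^2\bm{v}$ and the downlink system reads $(\bm{I}-\bm{\Phi})\bm{q}\ge\bm{v}$, where $\bm{\Psi}=\bm{D}\bm{\Lambda}^{-1}\tilde{\bm{G}}$, $\bm{\Phi}=\bm{D}\bm{\Lambda}^{-1}\tilde{\bm{G}}^{\sf T}$, and $\bm{v}=\bm{D}\bm{\Lambda}^{-1}\bm{1}$. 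Both $\bm{\Psi}$ and $\bm{\Phi}$ are entrywise nonnegative, so by the Perron--Frobenius theorem each system admits a nonnegative power vector exactly when the corresponding spectral radius is below one. The key algebraic fact I would establish is that $\bm{\Phi}$ is a diagonal similarity transform of $\bm{\Psi}^{\sf T}$, namely $\bm{\Phi}=\bm{T}\bm{\Psi}^{\sf T}\bm{T}^{-1}$ with $\bm{T}=\bm{D}\bm{\Lambda}^{-1}$; hence $\rho(\bm{\Phi})=\rho(\bm{\Psi}^{\sf T})=\rho(\bm{\Psi})$, and the two feasibility conditions coincide.

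With simultaneous feasibility in hand, the heart of the argument is the power-conservation identity. Taking the componentwise-minimal solutions $\bm{p}^\star=\sigma^2(\bm{I}-\bm{\Psi})^{-1}\bm{v}$ and $\bm{q}^\star=(\bm{I}-\bm{\Phi})^{-1}\bm{v}$ (which exist and are nonnegative once $\rho(\bm{\Psi})<1$), I would push the similarity transform through the inverse, $(\bm{I}-\bm{\Phi})^{-1}=\bm{T}(\bm{I}-\bm{\Psi}^{\sf T})^{-1}\bm{T}^{-1}$, and use the two elementary identities $\bm{T}\bm{1}=\bm{v}$ and $\bm{T}^{-1}\bm{v}=\bm{1}$ to collapse $\bm{1}^{\sf T}\bm{q}^\star$ to $\bm{1}^{\sf T}(\bm{I}-\bm{\Psi})^{-1}\bm{v}$. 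This yields exactly $\sum_k p_k^\star=\sigma^2\sum_k q_k^\star$, which is what matches the uplink budget $\sum_k p_k\le P_{\sf sum}$ to the downlink budget $\sum_k\|\hat{\bm{m}}_k\|_2^2\le P_{\sf sum}/\sigma^2$.

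Finally I would assemble the two implications of the ``if and only if.'' For the forward direction, from any uplink-feasible $(\{\bm{m}_k\},\bm{p})$ I take the shared directions, form $\hat{\bm{m}}_k=\sqrt{q_k^\star}\,\bm{m}_k$, and note $\sum_k q_k^\star=\tfrac{1}{\sigma^2}\sum_k p_k^\star\le\tfrac{1}{\sigma^2}\sum_k p_k\le P_{\sf sum}/\sigma^2$, where the first inequality uses minimality $\bm{p}^\star\le\bm{p}$ (a consequence of $(\bm{I}-\bm{\Psi})^{-1}\ge\bm{0}$); the reverse direction is symmetric, extracting directions and norms from the $\hat{\bm{m}}_k$ and rescaling by $\sigma^2$. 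I expect the main obstacle to be the power-conservation step: getting the noise normalization ($\sigma^2$ in the uplink versus unit noise in the downlink) to align precisely requires the transpose-similarity identity together with the $\bm{T}\bm{1}=\bm{v}$, $\bm{T}^{-1}\bm{v}=\bm{1}$ bookkeeping, and it is easy to misplace a factor of $\sigma^2$ or a transpose there. A secondary point worth stating cleanly is that restricting to shared beamforming directions is without loss of generality, since for fixed powers the uplink-optimal receiver is the MMSE filter (the same filter implicitly used in the feasibility iteration of Algorithm~\ref{feasible_check}), and the duality preserves these directions.
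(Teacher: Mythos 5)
Your proposal is correct, and it is worth noting up front that the paper does not actually prove this lemma: it is stated as a known result and deferred to the cited uplink--downlink duality literature \cite{zhao2015user}. What you have written is a complete, self-contained version of the classical duality argument, and the details check out. In particular, the similarity identity is right: with $\bm{T}=\bm{D}\bm{\Lambda}^{-1}$ one has $(\bm{T}\bm{\Psi}^{\sf T}\bm{T}^{-1})_{ki}=\gamma_k^{\sf thr} G_{ik}/G_{kk}=\bm{\Phi}_{ki}$, so $\rho(\bm{\Phi})=\rho(\bm{\Psi})$, and the bookkeeping $\bm{T}\bm{1}=\bm{v}$, $\bm{T}^{-1}\bm{v}=\bm{1}$ does collapse $\bm{1}^{\sf T}\bm{q}^\star$ to $\bm{v}^{\sf T}(\bm{I}-\bm{\Psi}^{\sf T})^{-1}\bm{1}=\tfrac{1}{\sigma^2}\bm{1}^{\sf T}\bm{p}^\star$, which is exactly the $\sigma^2$ normalization appearing in the lemma's power budget $P_{\sf sum}/\sigma^2$. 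The monotonicity step ($\bm{p}\ge\bm{p}^\star$ for any feasible $\bm{p}$, via $(\bm{I}-\bm{\Psi})^{-1}=\sum_j\bm{\Psi}^j\ge\bm{0}$) correctly closes both directions of the equivalence. One small simplification: the closing remark about MMSE receivers and ``without loss of generality'' is not needed — in each direction of the equivalence you are handed a set of beamforming directions (receive in the forward direction, transmit in the reverse) and you simply reuse those same unit directions on the dual side, so no optimality claim about the directions is required. Compared with the paper, your route buys a verifiable proof where the paper offers only a citation; the cost is the Perron--Frobenius machinery, which is standard for this result.
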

Therefore, the uplink device scheduling problem \eqref{transfer3} can be equivalently reformulated as the dual downlink problem
\begin{equation}\label{transfer4}
\begin{aligned}
\underset{ \mathcal{S}, \{\hat{\bm{m}}_{k}\} }{\text{maximize}} \  & \sum_{k \in \mathcal{S}} \alpha_{k}\\
\text{subject to} \  & \frac{\text{Re}( \hat{\bm{m}}_{k}^{\sf H} \bm{h}_{k} )}{\sqrt{{\gamma}_{k}^{\sf thr}}}  \geq \sqrt{\sum_{i \in \mathcal{S}/\{k\}} \left|\hat{\bm{m}}_{i}^{\sf H} \bm{h}_{k}\right|^2 + 1},\\
&\text{Im}(\hat{\bm{m}}_{k}^{\sf H} \bm{h}_{k})  =  0, \forall  k \in \mathcal{S},\\
&\sum_{k\in \mathcal{S}} \|\hat{\bm{m}}_{k}\|_2^{2} \leq \frac{P_{\sf sum}}{\sigma^2},
\end{aligned}
\end{equation}
where the downlink SINR constraints are equivalently rewritten as the second order cone constraints, since the phase of $\hat{\bm{m}}_{k}$ will not change the objective function and constraints \cite{shi2014group}. However, problem \eqref{transfer4} is still difficult to solve because of the combinatorial optimization variable $\mathcal{S}$ and the non-convex and non-smooth objective function. By introducing the auxiliary variable $\bm{s}$ and applying reweighted $\ell_{1}$ minimization technique, we relax problem \eqref{transfer4} to
\begin{figure*}[!ht]
\centering{
\vspace{-0.1cm}
    \subfigure[Training loss vs. Round]{\includegraphics[width=0.32\textwidth]{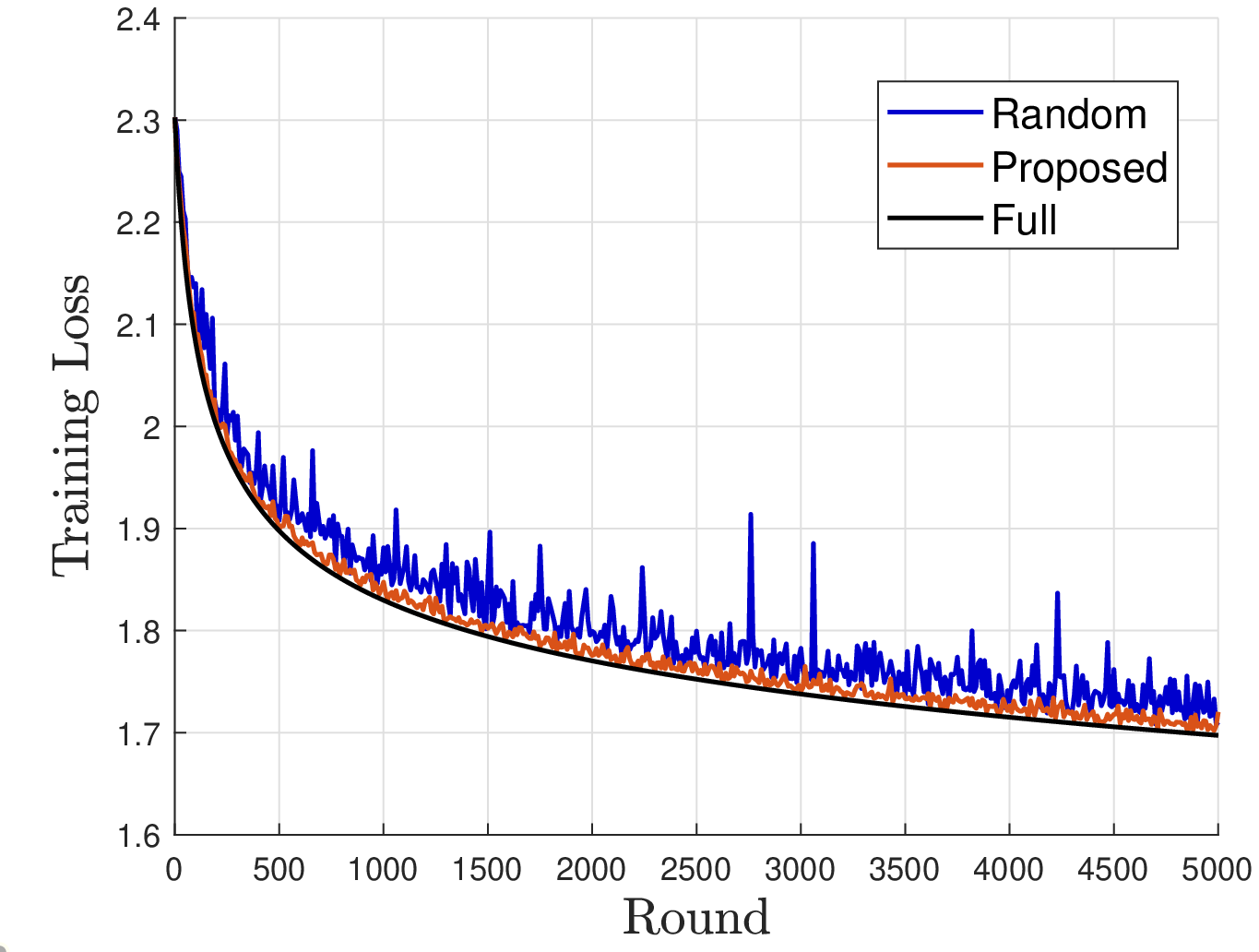}}
    \subfigure[Training accuracy vs. Round]{\includegraphics[width=0.32\textwidth]{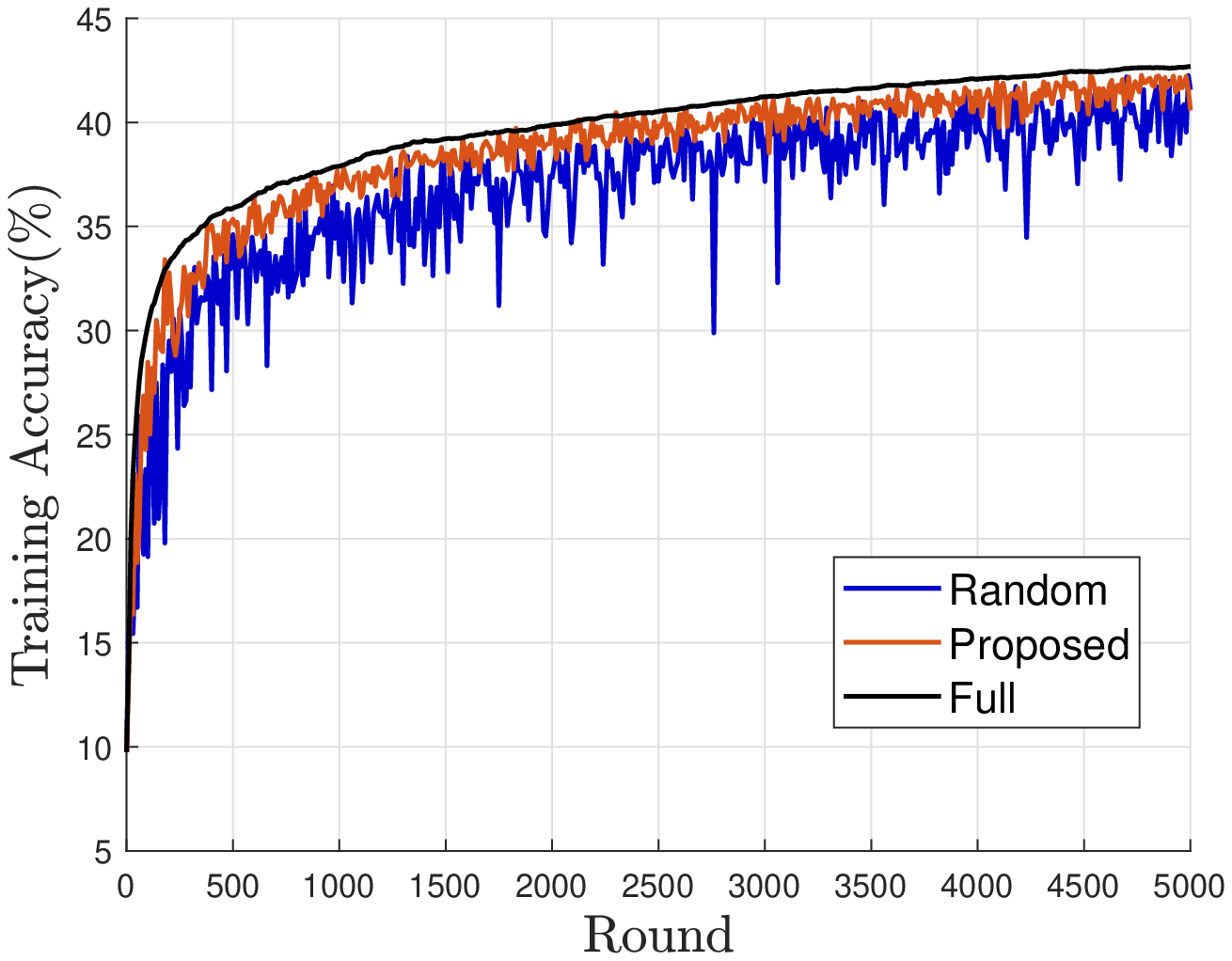}}
    \subfigure[Weighted sum of scheduled devices vs. Round]{\includegraphics[width=0.32\textwidth]{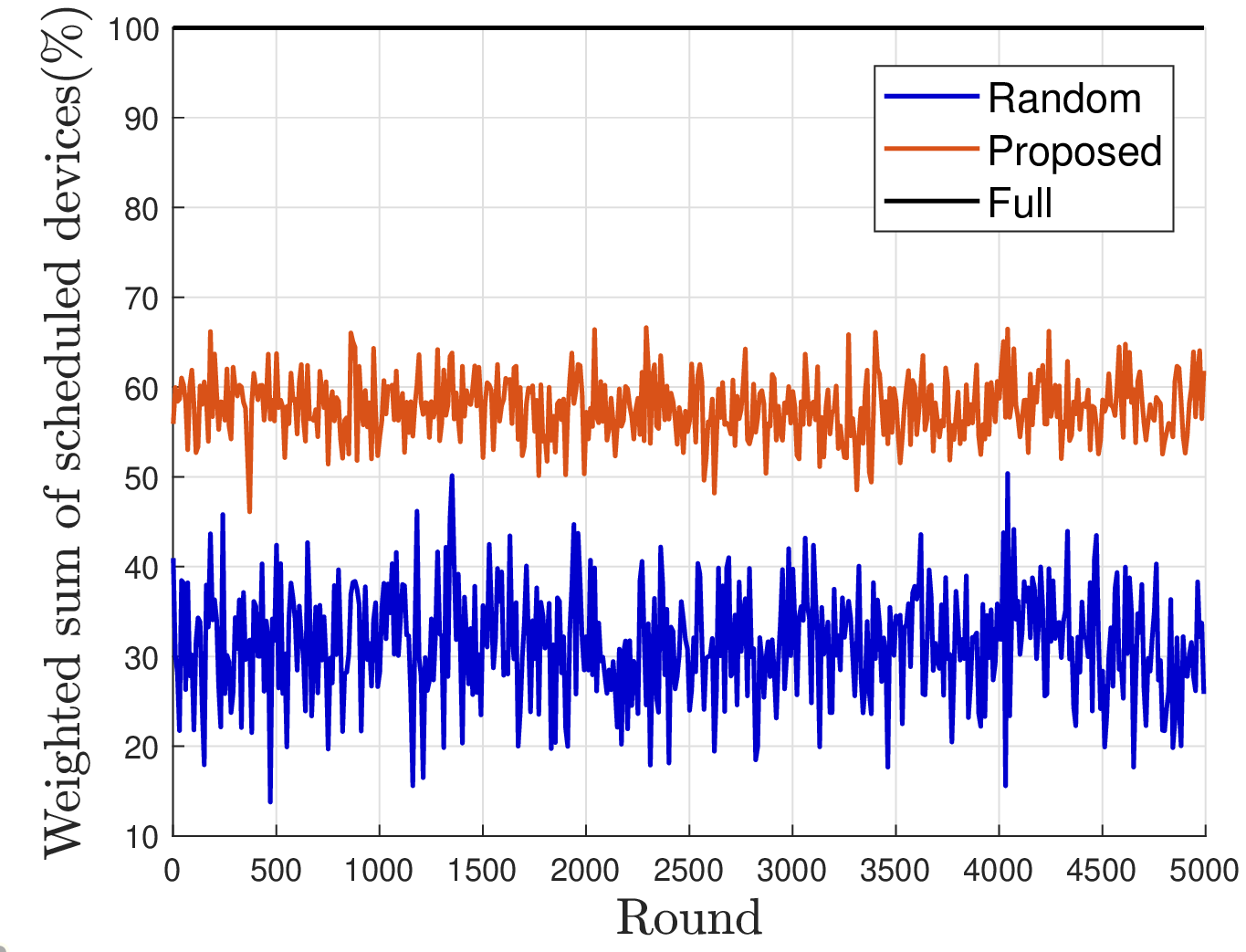}}
    \vspace{-0.1cm}
    \caption{The convergence results of the FL system over shared wireless MIMO networks under different device scheduling policies.}
    \label{Simulation_result2}
}
\vspace{-0.4cm}
\end{figure*}
\begin{equation}\label{final_problem}
\begin{aligned}
\underset{ \bm{s}, \{\hat{\bm{m}}_{k}\}}{\text{minimize}} \  & \sum_{k \in \mathcal{K}} \alpha_{k} s_{k}\\
\text{subject to} \  & \frac{\text{Re}( \hat{\bm{m}}_{k}^{\sf H} \bm{h}_{k} )}{\sqrt{{\gamma}_{k}^{\sf thr}}} \! + \! s_{k}  \geq \sqrt{\sum_{i \in \mathcal{K}/\{k\}} \left|\hat{\bm{m}}_{i}^{\sf H} \bm{h}_{k}\right|^2 \! + \! 1},\\
&\text{Im}(\hat{\bm{m}}_{k}^{\sf H} \bm{h}_{k}) = 0, \forall  k \in \mathcal{K},\\
&\sum_{k\in \mathcal{K}} \|\hat{\bm{m}}_{k}\|_2^{2}  \leq  \frac{P_{\sf sum}}{\sigma^2}, \bm{s} \geq \bm{0}.
\end{aligned}
\end{equation}
The proposed device scheduling policy is summarized in \textbf{Algorithm \ref{scheduling problem}}. More specifically, we first solve problem \eqref{final_problem} to get the device priority $\bm{s}$, and then iteratively add one device to the scheduling set $\mathcal{S}$ based on the device priority $\bm{s}$ if feasibility test in \textbf{Algorithm \ref{feasible_check}} could be passed \cite{cai2011unified}. Finally, we obtain the optimization solutions about the scheduling set $\mathcal{S}$ and the power allocation $\bm{p}$. The optimal receive beamforming vectors $\{\bm{m}_{k}\}$ that maximize the uplink SINR are the corresponding minimum-mean-square-error (MMSE) filters \cite{yu2007transmitter}, which are obtained in closed form as
\begin{equation}
    \bm{m}_{k} = \frac{ \left( \sigma^{2}\bm{I} + \sum_{i \in \mathcal{S}} p_{i} \bm{h}_{i} \bm{h}^{\sf H}_{i} \right)^{-1} \bm{h}_{k} }{\left\| \left( \sigma^{2}\bm{I} + \sum_{i \in \mathcal{S}} p_{i} \bm{h}_{i} \bm{h}^{\sf H}_{i} \right)^{-1} \bm{h}_{k} \right\|_2^{2}}, \forall k \in \mathcal{S}.
\end{equation}

\section{Simulation Results}
In this section, we evaluate the convergence gap of the FL system under the different device scheduling policies by numerical experiments. We consider the image classification task on the CIFAR-10 test dataset, which consists of $10000$ $32 \times 32$ RGB color images with $10$ classes. We adopt the \textit{Multinomial Logistic Regression} model ($d = 32\times32\times3\times10$ parameters) to classify the target dataset with the learning rate $\eta_{t} = 5\times 10^{-3}$ for all fixed $\tau = 5000$ rounds. Each parameter is stored with $I = 32$ bits to guarantee the numerical precision, and each data sample can be handled within $\frac{R}{f^{cap}_{k}} = 10^{-4}${\rm s} for all devices. We consider a FL system consisting of one edge server equipped with $N = 4$ antennas and $50$ single-antenna devices, and the devices are uniformly located in a region enclosed between the inner radius of $50$ meters and outer radius of $250$ meters. We assign the non-iid dataset to each device. In particular, we sort the image data samples according to their class, and divide them into $50$ disjoint sub-datasets with different sample sizes. The channel coefficient $\bm{h}_{k}$ at a distance of $d_{k}$ meters is generated as $\bm{h}_{k} = \sqrt{\beta_{k}} \tilde{\bm{h}}_{k}$, where path loss $\beta_{k} = -35.3 - 37.6 \log_{10}(d_{k})$ (in {\rm dB}) and $\tilde{\bm{h}}_{k}$ is independently generated via Rayleigh fading $\mathcal{CN}(\bm{0}, \bm{I})$. The noise power spectral density $N_0$, the uplink bandwidth $B$, the system latency constraint $T^{\sf thr}$ and the sum power constraint $P_{\rm sum}$ are set to $-174${\rm dBm/Hz}, $10${\rm MHz}, $1${\rm s} and $30{\rm mW}$, respectively \cite{shi2020joint}.

Fig. \ref{Simulation_result2} shows the convergence results of the FL system over shared wireless MIMO networks under different device scheduling policies. To better illustrate the simulation results, we make plot using one point for every $10$ result data samples (i.e., $1, 11, 21, \ldots$). To be specific, `Full' scheduling policy represents that all devices participate in the training process at each FL round without the wireless resource constraints. This is equivalent to conventional centralized ML scheme. `Proposed' scheduling policy represents that we select a subset of devices using the proposed \textbf{Algorithm \ref{scheduling problem}} at each FL round. `Random' scheduling policy represents that we randomly select a subset of devices at each FL round. Such randomly selected subset is required to pass feasibility test in \textbf{Algorithm \ref{feasible_check}} and thereby satisfies the wireless resource constraints. The simulation results manifest that the proposed scheduling policy could allow more weighted devices to participate in each training round, and consequently brings smaller convergence optimality gap and higher model training accuracy than the random scheduling policy. Observation from numerical experiments further verify the convergence analysis result in Section \ref{Convergence_Analysis}, i.e., the convergence optimality gap decreases as the weighted sum of scheduled devices increases at each FL round.

\section{Conclusion}
In this paper, we proposed a novel joint device scheduling and receive beamforming design approach for a delay-aware MIMO FL system. Specifically, we established the convergence analysis of the FL system, and then joint optimized the receive beamforming design and the device scheduling to maximally schedule weighed devices so as to further reduce the convergence optimality gap. Numerical results demonstrated that the proposed device scheduling policy could substantially enhance the learning performance of the FL process compared with the random scheduling policy.

% \label{Conclusion}
%\clearpage
\begin{appendices}
\section{Proof of Theorem \ref{thm1}} \label{proof_Theorem_1}
Combining the update rule \eqref{update_rule} of device $k$ and aggregation rule \eqref{agg_rule} of the edge server, we have
\begin{equation}
\begin{aligned}
    \bm{w}_{t} =& \frac{ \sum_{k \in \mathcal{S}_{t}} \alpha_{k} \left(\bm{w}_{t-1} - \eta_{t}  \frac{\sum_{i= 1}^{ n_{k} }\nabla f(\bm{w}_{t-1}; \bm{\xi}_{k,i} ) }{n_{k}} \right) } {\sum_{k \in \mathcal{S}_{t}} \alpha_{k} } \\
    =& \bm{w}_{t-1} - \eta_{t} \frac{ \sum_{k \in \mathcal{S}_{t}} \frac{\alpha_{k}}{ n_{k} } \sum_{i= 1}^{ n_{k} }\nabla f(\bm{w}_{t-1}; \bm{\xi}_{k,i} ) } {\sum_{k \in \mathcal{S}_{t}} \alpha_{k} } \\
    =& \bm{w}_{t-1} - \eta_{t} \left( \nabla F(\bm{w}_{t-1}) + \bm{e}_{t} \right),
\end{aligned}
\label{wt_iter}
\end{equation}
where the residual term $\bm{e}_{t}$ is defined as
\begin{equation}
\begin{aligned}
    \bm{e}_{t}
    % =& \frac{ \sum_{k \in \mathcal{S}_{t}} \frac{\alpha_{k}}{ n_{k} } \sum_{i= 1}^{ n_{k} }\nabla f(\bm{w}_{t-1}, \bm{\xi}_{k,i} ) } {\sum_{k \in \mathcal{S}_{t}} \alpha_{k} } - \nabla F(\bm{w}_{t-1})\\
    =& \frac{ 1 } {\sum_{k \in \mathcal{S}_{t}} \alpha_{k} }  \sum_{k \in \mathcal{S}_{t}} \frac{\alpha_{k}}{ n_{k} } \sum_{i= 1}^{ n_{k} }\nabla f(\bm{w}_{t-1}; \bm{\xi}_{k,i} ) \\
    &- \underbrace{\sum_{k \in \mathcal{K}} \frac{\alpha_{k}}{ n_{k} } \sum_{i= 1}^{ n_{k} }\nabla f(\bm{w}_{t-1}; \bm{\xi}_{k,i} ) }_{\nabla F(\bm{w}_{t-1})}\\
    =& \frac{ \sum_{k \notin \mathcal{S}_{t}} \alpha_{k} } {\sum_{k \in \mathcal{S}_{t}} \alpha_{k} }  \sum_{k \in \mathcal{S}_{t}} \frac{\alpha_{k}}{ n_{k} } \sum_{i= 1}^{ n_{k} }\nabla f(\bm{w}_{t-1}; \bm{\xi}_{k,i} )\\
    & - \sum_{k \notin \mathcal{S}_{t}} \frac{\alpha_{k}}{ n_{k} } \sum_{i= 1}^{ n_{k} }\nabla f(\bm{w}_{t-1}; \bm{\xi}_{k,i} ).\\
\end{aligned}
\end{equation}
Under \textbf{Assumption \ref{gradient_bound}}, we apply the norm inequality to $\bm{e}_{t}$
\begin{equation}\label{residual}
\begin{aligned}
    \left\|  \bm{e}_{t}  \right\|_2^{2} \leq& \left(  \frac{ \sum_{k \notin \mathcal{S}_{t}} \alpha_{k} } {\sum_{k \in \mathcal{S}_{t}} \alpha_{k} } \sum_{k \in \mathcal{S}_{t}} \frac{\alpha_{k}}{ n_{k} } \sum_{i= 1}^{ n_{k} } \left\| \nabla f(\bm{w}_{t-1}; \bm{\xi}_{k,i} ) \right\|_2 \right. \\
    & \left. + \sum_{k \notin \mathcal{S}_{t}} \frac{\alpha_{k}}{ n_{k} } \sum_{i= 1}^{ n_{k} } \left\|\nabla f(\bm{w}_{t-1}; \bm{\xi}_{k,i} ) \right\|_2 \right )^{2}\\
    \leq&  4 \kappa \left(  \sum_{k \notin \mathcal{S}_{t}} \alpha_{k}  \right)^{2}  .
\end{aligned}
\end{equation}
Under \textbf{Assumption \ref{L_smooth}} and $0 < \eta_{t} \equiv \varsigma \leq \frac{1}{L}$, we have
\begin{equation} \label{tmp_iter}
\begin{aligned}
    F(\bm{w}_{t}) \leq& F(\bm{w}_{t-1}) + (\frac{\varsigma^2 L}{2} - \varsigma ) \left\| \nabla F(\bm{w}_{t-1}) \right \|_2^{2} + \frac{\varsigma^2 L}{2} \left\| \bm{e}_{t} \right\|_2^{2} \\
    &+ (\varsigma - \varsigma^2 L) \nabla F(\bm{w}_{t-1})^{\sf T}  \bm{e}_{t} \\
    \leq& F(\bm{w}_{t-1}) -\frac{\varsigma}{2} \left\| \nabla F(\bm{w}_{t-1}) \right \|_2^{2} + \frac{\varsigma}{2} \left\| \bm{e}_{t} \right\|_2^{2}.
\end{aligned}
\end{equation}
Substitute \eqref{residual} into \eqref{tmp_iter}, we obtain
% \begin{equation}\label{tmp2_iter}
% \begin{aligned}
%     &\left\|  \nabla F(\bm{w}_{t-1}) \right\|^{2} \\
%     \leq& \frac{2 \left(F(\bm{w}_{t-1}) - F(\bm{w}_{t}) \right) }{  \left(1- 4 \rho_{2} \left( \sum_{k \notin \mathcal{S}_{t}} \alpha_{k} \right)^{2} \right) \varsigma } + \frac{4 \rho_{1} \left( \sum_{k \notin \mathcal{S}_{t}} \alpha_{k} \right)^{2} }{ 1- 4 \rho_{2} \left( \sum_{k \notin \mathcal{S}_{t}} \alpha_{k} \right)^{2} }.
% \end{aligned}
% \end{equation}
% , we have
\begin{equation}\label{one_times}
\begin{aligned}
    \left\|  \nabla F(\bm{w}_{t-1}) \right\|_2^{2} \leq& \frac{2 \left( F(\bm{w}_{t-1}) - F(\bm{w}_{t}) \right) }{ \varsigma } + 4 \kappa \left( \sum_{k \notin \mathcal{S}_{t}} \alpha_{k}  \right)^{2} .
\end{aligned}
\end{equation}
Then, summing both sides of \eqref{one_times} for $t \in \{1, \ldots, \tau\}$ and combinging $F(\bm{w}_{\tau}) \geq F(\bm{w}^{*})$, we can obtain \textbf{Theorem \ref{thm1}}.
\end{appendices}

\bibliography{Reference} 
\bibliographystyle{ieeetr}

\end{document}